\documentclass[aps, prd,twocolumn,showpacs,superscriptaddress,nofootinbib]{revtex4}

\usepackage{braket}
\usepackage{amsfonts}
\usepackage{amsmath}
\usepackage{amssymb}
\usepackage{amsthm}
\usepackage{bm}
\usepackage{dcolumn}
\usepackage{epsfig}
\usepackage{graphicx}
\usepackage{graphics}
\usepackage[latin1]{inputenc}
\usepackage{latexsym}
\usepackage{rotating}
\usepackage[colorlinks=true]{hyperref}
\usepackage[usenames]{color}
\usepackage{yfonts}
\usepackage{float}
\usepackage{ucs}
\usepackage{xspace} 
\usepackage{mathrsfs}
\usepackage{booktabs}
\usepackage[toc,page]{appendix}
\usepackage{siunitx}
\usepackage{array}
\usepackage[normalem]{ulem}
\usepackage{tikz}

\newcommand{\be}{\begin{equation}}
\newcommand{\ee}{\end{equation}}
\newcommand{\EH}{{\mbox{\tiny EH}}}
\newcommand{\CS}{{\mbox{\tiny CS}}}
\newcommand{\mat}{{\mbox{\tiny mat}}}

\newcommand{\N}{{\mbox{\tiny Noeth}}}

\newcommand{\q}{Q}
\newcommand{\J}{J}
\newtheorem{theorem}{Theorem}

\newcommand{\B}{\mathcal{B}}

\newcommand{\al}{\frac{\alpha}{4}}

\begin{document}
	\title{Hair Loss in Parity Violating Gravity}
	
	\author{Pratik Wagle}
	\affiliation{eXtreme Gravity Institute, Department of Physics, Montana State University, Bozeman, MT 59717, USA.}
	
	\author{Nicol\'as Yunes}
	\affiliation{eXtreme Gravity Institute, Department of Physics, Montana State University, Bozeman, MT 59717, USA.}
			
	\author{David Garfinkle}
	\affiliation{Oakland University, 	Mathematics and Science Center, 146 Library Drive, Rochester, Michigan 48309, USA}

	\author{Lydia Bieri}
	\affiliation{Department of Mathematics, University of Michigan, East Hall, 530 Church Street, Ann Arbor, MI 48109 USA.}

	\date{\today}
	
\begin{abstract}

		The recent detection of gravitational waves by the LIGO/VIRGO collaboration has allowed for the first tests of Einstein's theory in the \textit{extreme gravity} regime, where the gravitational interaction is simultaneously strong, non-linear and dynamical. One such test concerns the rate at which the binaries inspiral, or equivalently the rate at which the gravitational wave frequency increases, which can constrain the existence of hairy black holes. This is because black holes with scalar hair typically excite dipole radiation, which in turn leads to a faster decay rate and frequency chirping. In this paper, we present mathematical proofs that scalar hair is \textit{not} sourced in stationary, asymptotically flat, and axisymmetric spacetimes, including those appropriate to stars and black holes, when considering extensions of Einstein's theory that break parity in gravity, focusing on dynamical Chern-Simons theory as a particular toy model.This result implies that current electromagnetic observations of accreting black hole systems or of binary pulsars cannot constrain parity violation in gravity today.
	

	
\end{abstract}

\maketitle

\section{Introduction}

``Who ordered that?!'' This is what Nobel laureate Rabi exclaimed when he was informed about the discovery of the muon particle in 1936. This discovery was made by Anderson and Neddermeyer when carrying out the first studies of cosmic rays as they traversed a magnetic field. We are now entering a new era of physics with the recent observation of gravitational waves by the LIGO and Virgo detectors. The potential for new discoveries is as great as back in the 1930s, and it remains unclear what surprises more sensitive observations will bring.   

Such surprises could extend beyond astrophysics and into theoretical physics. General Relativity (GR) has indeed passed a tsunami of tests in the Solar System~\cite{Will2014}, with radio observations of pulsars~\cite{Stairs2003}, and recently with the first observations of gravitational waves~\cite{PhysRevLett.116.061102}. Solar System tests, however, only probe the weak-field region of the gravitational interaction, while pulsar observations probe the strong but not very dynamical regime. Gravitational waves can sample the extreme gravity region, where the fields are simultaneously strong and dynamical, but such tests are only in the infancy~\cite{Berti:2018cxi}. 
    
At the same time, the theoretical physics community has begun exploring modifications to GR, encouraged both from anomalous observations, as well as theoretical necessity. On the observational side, galaxy rotation curves, among other observations, cannot be explained without some form of dark matter~\cite{doi:10.1146/annurev.astro.40.060401.093923}, while the late time acceleration of the universe seems to require some form of dark energy. Together, these matter-energy content would account for over 95\% of the total matter-energy in the Universe, which has prompted theorists to seek alternative explanations through large-scale modifications to GR~\cite{DeFelice:2010aj,deRham:2014zqa}. On the theoretical side, the intrinsic incompatibility of GR with quantum mechanics has prompted efforts at a variety of unified theories, from string theory~\cite{Polchinski:1998rq} to loop quantum gravity~\cite{Alexander:2009tp,Alexander:2004xd,Taveras:2008yf}.

Whether any of these attempts at modifying GR has anything to do at all with reality will only be determined through experiment and observation. Neutron stars (NS) are ideal laboratories to carry out such tests because of how strong the gravitational interaction is in their vicinity. Radio observations of pulses of radiation emitted by rotating NSs in binary systems have allowed for stringent tests of GR in this strong field regime ~ \cite{Damour:1996ke,Kramer:2006nb,Stairs2003}. In particular, the rate at which the orbital period of the binary decays has been observed to be in agreement with GR to spectacular precision, thus stringently constraining any modification to GR that predicts a different rate ~\cite{Stein:2013wza,Yagi:2012vf,Barausse:2015wia}.  

This single fact has forced us to throw a large heap of theories into the trash bin, but not all theories are automatically ruled out. The reason that the orbital period decay accelerates in many theories is that there is a monopolar scalar or vector field present that activates in the vicinity of isolate NSs. Since the field is anchored to the star, and the star is in a binary, the field moves with the NS, sourcing scalar or vector field waves that remove energy from the system, thus accelerating the orbital decay. Determining whether a monopolar scalar or vector field activates in the presence of NSs is thus paramount to establish the viability of said modified theory.  

A monopolar scalar field is known to be absent in sufficiently symmetric solutions of certain modified theories. For example, non-rotating or slowly-rotating NSs in Einstein-dilaton-Gauss-Bonnet (EdGB) gravity~\cite{Yagi:2015oca} or in dynamical Chern-Simons gravity (dCS)~\cite{Yagi:2013mbt} have no such monopolar field, and in fact, this holds in all shift-symmetric Horndeski theories \cite{Barausse:2015wia}. These theories are all modifications to the GR in which an additional dynamical scalar field is introduced through a non-minimal coupling with a curvature invariant, like the Gauss-Bonnet invariant in the EdGB case or the Pontryagin density in the dCS case.  Black holes (BH), on the other hand, do have a monopolar scalar field in EdGB~\cite{PhysRevD.98.021503}, because certain no-hair theorems can be avoided~\cite{Sotiriou:2015pka}. Radio observations of binary pulsars, however, do not include BHs, and thus, the constraints discussed above can be avoided.  

Can one then prove mathematically that \emph{all} NSs have no such monopolar scalar field in quadratic gravity? The first formal proof focused on EDGB theory~\cite{Yagi:2015oca}, in which the generalized Chern-Gauss-Bonnet theorem is employed for manifolds with boundary~\cite{doi:10.1063/1.531015,Gilkey:2014wca}. This proof therefore applies only for spherically symmetric and stationary spacetimes, in which spacetime can be compactified in the radial and time-like directions into a torus. Recently, this proof was extended to spacetimes without a boundary~\cite{PhysRevD.98.021503} through the use of a generalized Noether theorem~\cite{Wald:1993nt}. The scalar charge was then shown to be given by a certain integral of the Riemann tensor over the inner boundary of the spacetime, which is actually equal to its Euler characteristic.  This Euler characteristic can be thought of as the Euler number for a certain topology, which in turn is related to the number of punctures in the spacetime. 

In this paper, we focus on dCS gravity and prove that in asymptotically flat, stationary, and axisymmetric spacetimes, both in vacuum and not in vacuum, the scalar monopole charge vanishes. First, we work with a one-parameter family of metrics that can be continuously deformed from Minkowski spacetime to a non-punctured spacetime, like that of a NS. Within this framework, we use a variational technique to prove that the scalar charge is given by the (variation of the) integral of the Pontryagin density at spatial infinity, which vanishes identically.  Then, we work with the generalized Noether approach and show that the scalar charge can be related to an integral of the dual Riemann tensor over the inner boundary of the spacetime, which is actually equal to the Euler class of the normal bundle at the boundary. Given that NSs lack such inner boundary, this charge identically vanishes, but moreover, it also vanishes for stationary and axisymmetric BHs, which do have an inner boundary (their horizon) but for which the Euler class of the normal bundle is identically zero.

Since the scalar charge is related to the Euler class of the normal bundle at the boundary, can we understand better what this class is? One way to understand the Euler class is as a measure of how twisted the vector bundle is. If the vector bundle under consideration possesses a nowhere-zero section, then the Euler class vanishes. We will later show that since we work with smooth, non-vanishing vectors at the boundary surface, the Euler class for such a boundary has to be identically zero. Alternatively, one can think of the Euler class in terms of the Euler characteristic. The Chern-Gauss-Bonnet theorem guarantees that the integral of the Euler class over a certain boundary surface equals the Euler characteristic of the manifold. Therefore, one can think of the Euler class as the change in the Euler characteristic per unit surface area. For a BH, the number of punctures in the spacetime is independent of its horizon area, since the latter only depends on its mass and spin, which then implies the Euler class, and thus the scalar charge, are exactly zero.

Our results therefore establish through mathematical proofs that NSs and BHs lack a scalar monopole charge in dCS gravity, preventing binary pulsar observations or gravitational wave observations (of non-spinning binaries) to place constraints on the theory. The remainder of this paper deals with the mathematical details of the proofs summarized above. 
Section~\ref{sec:I} provides a basic introduction to the dCS gravity,
Section~\ref{sec:III} gives an in-depth mathematical proof for the vanishing of scalar charge for a NS whereas Section~\ref{sec:IV} provides the proof for a vanishing scalar charge for BH geometry.
Section~\ref{sec:conclusions} concludes and points to future research.
Henceforth, we adopt the following conventions throughout the paper: we work exclusively in 4-dimensions with metric signature (-,+,+,+), Latin letters (a,b,c,...) in index lists represent spacetime indices, square brackets around indices denote anti-symmetrization, the $\nabla_a$ operator is the covariant derivative and the $\pounds_X$ operator is the Lie derivative with respect to the vector field $X^{\mu}$, the Einstein summation convention is employed unless mentioned otherwise, and we have worked in the geometrical units where $G=1=c$.

\section{ \label{sec:I} Dynamical Chern-Simons Gravity} 

This section provides a brief review of dCS gravity and establishes some notation. We only present a minimal review here and direct the interested reader to the recent review paper~\cite{Alexander:2009tp}. The action is given by
\be \label{Action1}
 S = S_{\EH} + S_{\CS} + S_{\vartheta} + S_{\mat}\,,
\ee
where the Einstein Hilbert term is
\be
 S_{\EH} = \kappa \int_{\nu} d^4 x \sqrt{-g} R\,,
\ee
with $\kappa = (16 \pi G)^{-1}$, $R$ the Ricci scalar and $g$ the determinant of the metric tensor $g_{ab}$.
The CS term is
\be
 S_{\CS} =  \frac{\alpha}{4} \int_\nu d^4 x \ \sqrt{-g} \ \vartheta  ~^*\!R R\,,
\ee
where $\alpha$ is a coupling constant, $^*\!R R$ is the Pontryagin density, defined via
\be \label{eq:Pont1}
^*\!R R  = ~^*\! R^{a}{}_{b}{}^{cd} R^b{}_{acd}\,,
\ee
with $^*\!R^{a}{}_{b}{}^{cd}$ the dual Riemann tensor defined as 
\be
^*\!R^{a}{}_{b}{}^{cd} = \frac{1}{2} \epsilon^a{}_{bxy} ~ R^{xy}{}^{cd} \,,
\ee

and $\vartheta$ is a pseudo-scalar field. The action for this field is
\be
S_{\vartheta} = - \frac{1}{2} \int_\nu d^4 x  \sqrt{-g} \left[ g^{ab} (\nabla_a \vartheta) (\nabla_b \vartheta) + 2 V(\vartheta) \right]\,,
\ee
where $\nabla_{a}$ is the covariant derivative operator compatible with the metric, while $V(\vartheta)$ is a potential for the scalar that we set to zero. In addition to these terms, one must also include a matter action that couples only to the metric tensor.

The field equations for dCS gravity can be obtained by varying the action with respect to the metric tensor and the scalar field. These equations are
\begin{align}
	E_{g} = R_{ab} + \frac{\alpha}{\kappa} & C_{ab} - \frac{1}{2 \kappa} \left( T_{ab} - \frac{1}{2} g_{ab}T \right) = 0 \,,
	\label{eq:field-eq}
\\
    \label{eq:CSfield}	E_\vartheta &= \Box \vartheta + \frac{\alpha}{4} ~^*\! R  R = 0 \,,
\end{align}
where $ \Box \equiv \nabla_a \nabla^a$ is the d'Alembertian operator, $T_{ab}$ is the sum of the matter and the scalar field stress-energy tensor, $R_{ab}$ is the Ricci tensor and $C_{ab}$ is the C-tensor, which contains derivatives of the scalar field and the metric. $E_g $ and $ E_\vartheta $ are the gravitational and dCS scalar field equations of motion respectively.

When the scalar field is constant, then dCS gravity reduces identically to GR. This is clear from the field equations, but it can also be seen at the level of the action. The Pontryagin density is the total divergence of the Chern-Simons topological current
\be \label{eq:pont2}
~^*\!R R  = 2 \nabla_a K^a \,,
\ee
where
\be \label{eq:topocurr}
K^a  = \epsilon^{abcd} \Gamma^n_{bm} (\partial_c \Gamma^m_{dn} + \frac{2}{3} \Gamma^m_{cl} \Gamma^l_{dn} )\,,
\ee
$\epsilon^{abcd}$ is the Levi-Civita tensor, and $\Gamma^{n}_{bm}$ is the Christoffel connection. A brief proof of this identify is provided in Appendix A. When the $\vartheta$ field is constant, $S_{\CS}$ becomes a total divergence that can be integrated by parts in the action and eliminated as a boundary contribution.

\section{ \label{sec:III}Stellar Hair Loss in Dynamical CS Gravity}

This section presents one of the main results of this paper, i.e.~that scalar charge is not sourced in a non-punctured spacetime in parity violating gravity models like in dCS theory. As we will see below, this result is similar to that of~\cite{Yagi:2015oca}, except that that paper considered a different modified theory and the proof here is different.

\begin{theorem}
Consider a 4-dimensional manifold $M$ that is 
$\mathbb{R}^4$, and is endowed with a metric $g$ of Lorentzian
signature, which is stationary, asymptotically flat and  axisymmetric~\cite{Wald:1984cw,Stewart:1990}.
Furthermore, require that the Riemann curvature tensor be continuous
\emph{almost everywhere}, with discontinuities only in a spatially compact set of
measure zero. In an asymptotically Cartesian coordinate
system, the components of this tensor are required to decay at least as
$\mathcal{O}(r^{-2})$. Consider now a real scalar field $\vartheta$,
stationary under the same isometries of the metric, and that
is governed by a linear coupling in the action to the
Pontryagin density, thus satisfying the equation of motion
\begin{equation}
\label{eq:vartheta-prop-GB}
  \square\vartheta = - \frac{\alpha}{4}~ {}^{*}R{}R\,.
\end{equation}
for some constant $\alpha$.
Then, the asymptotically $1/r$, spherically-symmetric scalar hair (the \emph{scalar charge}) vanishes.
\end{theorem}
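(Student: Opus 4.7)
The plan is to integrate the scalar equation of motion over a suitable 4-volume and extract the scalar charge from the boundary flux of $\nabla\vartheta$ at spatial infinity, while the source integral on the right-hand side is shown to vanish because it is the divergence of the (non-tensorial) Chern-Simons current $K^a$ whose boundary flux decays away. Concretely, I would take $V$ to be the region bounded below and above by two $t=\text{const}$ slices $\Sigma_{t_0}$, $\Sigma_{t_1}$ and on the outside by a timelike cylinder $\mathcal{T}_R$ at radius $R$, then integrate both sides of (\ref{eq:CSfield}) against $\sqrt{-g}\,d^4x$ and subsequently take $R\to\infty$.

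For the left-hand side, Stokes' theorem gives $\oint_{\partial V}\nabla^a\vartheta\,dS_a$. Stationarity of $\vartheta$ makes the integrand on each $\Sigma_{t_i}$ time-independent, so the two slice contributions cancel; the cylinder piece, with $\vartheta=\vartheta_\infty+\mu/r+\mathcal{O}(r^{-2})$ in an asymptotically Cartesian chart, isolates the monopole and tends to $-4\pi\mu(t_1-t_0)$ as $R\to\infty$, since every $\ell\geq 1$ mode contributes at most $\mathcal{O}(R^{-\ell})$ to the flux. For the right-hand side, using ${}^{*}RR=2\nabla_aK^a$ from (\ref{eq:pont2}) converts the bulk integral into $2\oint_{\partial V}K^a\,dS_a$. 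In a chart in which $\partial_t$ is the Killing vector, $\partial_t g_{ab}=0$ forces $\partial_tK^a=0$, so the slice contributions again cancel. On the cylinder, the assumed $\mathcal{O}(r^{-2})$ decay of the Riemann tensor together with asymptotic Cartesian flatness yields $\Gamma=\mathcal{O}(r^{-2})$ and $\partial\Gamma=\mathcal{O}(r^{-3})$, whence $K^a=\mathcal{O}(r^{-5})$ and the flux across $\mathcal{T}_R$ scales as $R^{2}\cdot R^{-5}\to 0$. Equating the two sides then forces $\mu=0$.

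I expect the main obstacle to be the non-tensorial character of $K^a$: both its ``stationarity'' and its asymptotic fall-off are only meaningful relative to a single chart that is simultaneously adapted to the timelike Killing field and Cartesian at infinity, and one must verify such a chart exists globally on $\mathbb{R}^4$ for the class of spacetimes admitted by the hypotheses. Two further technical points need care. First, the Riemann tensor is permitted to be discontinuous on a compact measure-zero set, which I would accommodate by excising a thin tubular neighborhood of the singular set, applying the divergence theorem on the regular region, and showing that the excision boundary contributes nothing as the neighborhood shrinks (the fields are bounded and the measure of the set vanishes). Second, justifying that only the monopole survives on $\mathcal{T}_R$ requires a uniform bound on the higher multipoles, which one obtains from the stationary, axisymmetric solution of the scalar Poisson-type equation sourced by a compactly supported source plus a radiation-free asymptotic expansion. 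If direct control over $K^a$ in one global chart proves awkward, the variational strategy signaled in the introduction---deforming $g$ continuously from the Minkowski reference $g_0$ and writing the charge as $\int_0^1 d\lambda\,\delta_\lambda\!\!\int {}^{*}RR$, with each $\delta_\lambda\!\!\int {}^{*}RR$ expressible as a manifestly covariant boundary integral at infinity---provides a backup route to the same conclusion.
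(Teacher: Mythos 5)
Your proposal is correct, but it reaches the conclusion by a genuinely different route than the paper. The paper never integrates the Pontryagin source directly; instead it introduces a one-parameter family of metrics $g_{ab}(\lambda)$ interpolating from Minkowski, shows $d\mu/d\lambda=0$ by writing $\delta\int\sqrt{-g}\,{}^{*}RR$ as the boundary flux of the manifestly tensorial quantity ${}^{*}R^{c}{}_{d}{}^{ab}\,\delta\Gamma^{d}_{cb}$ (the variation $\delta\Gamma$ being a genuine tensor), and estimates that flux as $\mathcal{O}(r^{-1})$ on the large cylinder. You instead evaluate the source integral in one shot via ${}^{*}RR=2\nabla_aK^a$ and kill the flux of the Chern--Simons current $K^a$ by its fall-off; this is shorter and avoids the deformation and the connectivity-to-Minkowski step entirely. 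What the paper's detour buys is exactly the issue you flagged yourself: covariance. Your argument lives in a single chart that must simultaneously be Killing-adapted (so that $\partial_t K^a=0$ and the endcaps cancel) and asymptotically Cartesian (so that $\Gamma=\mathcal{O}(r^{-2})$, hence $\sqrt{-g}\,K^a=\mathcal{O}(r^{-5})$ and the cylinder flux dies); such a chart does exist for stationary asymptotically flat metrics on $\mathbb{R}^4$, and even the weaker estimate $\Gamma=\mathcal{O}(r^{-1})$, $\partial\Gamma=\mathcal{O}(r^{-2})$ implied by the stated Riemann decay still gives a flux of $\mathcal{O}(r^{-1})\to 0$, so there is no gap --- but the paper's variational integrand is chart-independent by construction and sidesteps the question. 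Two smaller differences: the paper first uses the equation of motion to show the $1/r$ coefficient $F(\theta)$ must be constant before identifying $\mu$, whereas your flux automatically extracts only the monopole moment (equivalent for the theorem as stated); and the paper does not explicitly treat the measure-zero discontinuity set, while your excision argument handles it cleanly since $K^a$ remains continuous there.
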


\begin{proof}  Let us begin by defining the scalar charge of the scalar field by considering the behavior of $\vartheta$ near spatial infinity $i^{0}$. Stationarity, axisymmetry, and asymptotic flatness imply that the first term in the expansion for the scalar field about spatial infinity takes the form
\be
\vartheta = \frac{F(\theta)}{r} + {\cal{O}}(r^{-2})\,,
\ee
where $\theta$ is the polar angle and $r$ is the radial coordinate far from the source. If $F(\theta)$ is not a constant, then the left-hand side of Eq.~\eqref{eq:vartheta-prop-GB} would contain a term that scales as $r^{-3}$ near spatial infinity, arising from the flat space piece of the curved D'alembertian operator.  However, all terms on the right-hand side of this equation, as well as those left on the left-hand side, decay faster than $r^{-3}$ near spatial infinity. Therefore, under the assumption that $F(\theta)$ is not constant,  Eq.~\eqref{eq:vartheta-prop-GB} cannot be satisfied, and we are then forced to require that $F(\theta)$ be a constant, leading to
\be \label{vartheta}
 \vartheta = \frac{\mu}{r} + \mathcal{O}(r^{-2})\,.
\ee
where $\mu$ is the (constant) scalar charge.

%

Consider now a one-parameter family of stationary, asymptotically flat metrics $g_{ab}(\lambda)$, and let $\vartheta (\lambda)$ be the corresponding one parameter family of stationary, asymptotically flat solutions of Eq.~(\ref{eq:vartheta-prop-GB}).  We will show that $d\mu/d\lambda =0$ for all $\lambda$ and therefore that $\mu$ is independent of $\lambda$.  Therefore, since any metric that satisfies the conditions of Theorem 1 can be connected to the Minkowski metric by a one parameter family of metrics, and since the solution of Eq.~(\ref{eq:vartheta-prop-GB}) on Minkowski spacetime is $\vartheta =0$, which (trivially) has zero scalar charge, it then follows that in any metric satisfying the conditions of Theorem 1, the solution of Eq.~(\ref{eq:vartheta-prop-GB}) has zero scalar charge.

	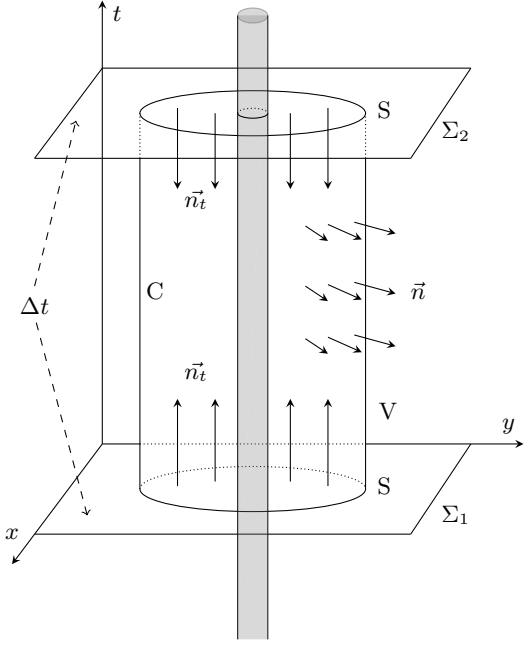
\begin{figure}
	\centering

	\begin{tikzpicture}
	
	\draw (0,0)--(0,4.4) ;
	\draw (3,0)--(3,4.4);
	\draw (1.5,5) ellipse (1.5cm and 0.3cm);
	\draw (0,0) arc(180:360:1.5 and 0.3); 
	\draw [densely dotted] (0,0) arc(0:180:-1.5 and 0.3);
	\draw [->,>=stealth] (-0.5,0.6)--(-0.5,6.5);
	\draw [->,>=stealth](-0.5, 0.6) -- (-1.7,-1);
	\draw (-0.5,0.6)--(0,0.6);
	\draw[densely dotted] (0,0.6)--(3,0.6);
	\draw [->,>=stealth](3,0.6)--(5.1,0.6);
	\draw (-1.4,-0.6)--(3.6,-0.6);
	\draw (3.6,-0.6) -- (4.4,0.6);
	\draw (-0.5,5.6)--(-1.4,4.4);
	\draw (-1.4,4.4) -- (3.6,4.4);
	\draw (3.6,4.4)--(4.4,5.6);
	\draw (-0.5,5.6)--(4.4,5.6);
	\draw (1.3,-2)--(1.3,6.3);
	\draw (1.7,-2)--(1.7,6.3);
	\draw [fill=gray,opacity=0.3](1.5,6.3) ellipse (0.2 and 0.1);
	\fill [gray,opacity=0.3] (1.3,-2)--(1.7,-2)--(1.7,6.3)--(1.3,6.3)arc(180:360:0.2 and 0.1);
	\fill [gray,opacity=0.3] (1.3,-2)--(1.3,6.3)--(1.7,6.3)--(1.3,6.3)arc(180:360:0.2 and 0.1);
	\draw (1.3,5) arc(180:360:0.2 and 0.066);
	\draw [densely dotted](1.3,5) arc(180:360:0.2 and -0.066);
	\draw [densely dotted] (0,4.4)--(0,5);
	\draw [densely dotted] (3,4.4)--(3,5);
	\draw [->,>=stealth] (2.2,3.5)--(2.5,3.3);
	\draw [->,>=stealth] (2.2,2.7)--(2.5,2.5);
	\draw [->,>=stealth] (2.2,2)--(2.5,1.8);
	\draw [->,>=stealth] (2.5,3.52)--(2.95,3.32);
	\draw [->,>=stealth] (2.5,2.72)--(2.95,2.52);
	\draw [->,>=stealth] (2.5,2.02)--(2.95,1.82);
	\draw [->,>=stealth] (2.85,3.55)--(3.4,3.4);
	\draw [->,>=stealth] (2.85,2.75)--(3.4,2.6);
	\draw [->,>=stealth] (2.85,2.05)--(3.4,1.9);
	\node (t) at (-0.3,6) [label=$t$]{};
	\node (x) at (-1.7,-0.9) [label=$x$]{};
	\node (y) at (4.9,0.5) [label=$y$]{};
	\node (n) at (3.7,2.3) [label=$\vec{n}$]{};
	\node (C) at (0.2,2.3) [label=C]{};
	\node (S1) at (4.2,-0.7) [label=$\Sigma_1 $]{};
	\node (S2) at (4.2,4.4) [label=$\Sigma_2$]{};
	\draw [<-][dashed](-0.7,-0.35)--(-1.4,2.3);
	\draw [->][dashed](-1.4,2.7)--(-0.85,4.9);
	\node (dt) at (-1.4,2.1) [label=$\Delta t$]{};
	\draw [->,>=stealth] (0.5,5.07)--(0.5,3.99);
	\draw [->,>=stealth] (1,5)--(1,3.99);
	\draw [->,>=stealth] (2,5)--(2,3.99);
	\draw [->,>=stealth] (2.5,5.07)--(2.5,3.99);
	\draw [->,>=stealth] (1,0.1)--(1,1.2);
	\draw [->,>=stealth] (0.5,0.04)--(0.5,1.2);
	\draw [->,>=stealth] (2,0.1)--(2,1.2);
	\draw [->,>=stealth] (2.5,0.04)--(2.5,1.2);
	\node (n1) at (0.75,1.2) [label=$\vec{n_t}$]{};
	\node (n2) at (0.75,3.49) [label=$\vec{n_t}$]{};
	\node (v) at (3.3,0.7) [label=V]{};
	\node (s1) at (3.25,-0.3) [label=S]{};
	\node (s1) at (3.25,4.7) [label=S]{};
	
	\end{tikzpicture}
	\label{Fig. 2}
	\caption{A schematic diagram of the spacetime needed in our proof. The time axis points upwards, and spacelike hypersurfaces are orthogonal to it, where we have suppressed one spatial dimension. The dark shaded cylinder represents the world tube of the stellar object under consideration. The region of integration in our proof is the outer cylinder C, which is a 3-ball of radius $r$ over a time interval $ \Delta t$. The spacelike unit normal vector $\vec{n}$ is always outward directed whereas the timelike unit normal vector $\vec{n_t}$ is inward directed.}
\end{figure}

We must then only show that $d\mu/d\lambda =0$.  In the spacetime $(M,{g_{ab}} (\lambda))$ [see Fig. 1], 
choose a spacelike slice $\Sigma_1$ and let the spacelike slice $\Sigma_2$ be $\Sigma _1$ translated forward in time by an amount $\Delta t$ along the stationary Killing field.  Let $S$ be a sphere of some large radius $r$ in $\Sigma_1$ (where in the end we will take $r \to \infty$), and let the cylinder $C$ be $S$ translated along the Killing vector by all values from $0$ to $\Delta t$.  Now let $V$ be the spacetime volume bounded by $\Sigma_1$, $\Sigma _2$ and $C$.  

With this setup in mind, let us now consider the solution $\vartheta (\lambda)$ of Eq.~(\ref{eq:vartheta-prop-GB}) in this spacetime and integrate both sides of that equation over the spacetime region $V$.  Integrating the left-hand side of Eq.~(\ref{eq:vartheta-prop-GB}) and applying Stokes' theorem we obtain
\be \label{eq:Stokes}
 \int_{V} d^4x \sqrt{-g} \; \Box \vartheta  = \int_{\partial V} d\Sigma_a \nabla^a \vartheta\,,
\ee
The boundary $\partial V$ consists of three pieces: the cylinder $C$ and two endcaps that consist of the part of $\Sigma_1$ inside $S$ and the corresponding part of $\Sigma _2$ inside $S$.  The contributions of the two endcaps clearly cancel each other, since stationarity implies that those contributions have the same magnitude, while the orientation of the normal vectors required by Stokes' theorem implies that the contributions have opposite sign.  We then find
\be \label{eq:Stokes2}
 \int_{V} d^4x \sqrt{-g} \; \Box \vartheta  = \int_{C} d\Sigma_a \nabla^a \vartheta\ =- 4 \pi \mu \Delta t
\ee
where we have taken the limit of large $r$ and used asymptotic flatness of the metric and Eq. \eqref{vartheta}.

Now let us use $\delta$ as an abbreviation for $d/d\lambda$.  Then from Eq.~(\ref{eq:vartheta-prop-GB}) and (\ref{eq:Stokes2}) we obtain
\be \label{eq:CSfieldint}
(4 \pi \Delta t) \delta \mu =  \al \; \delta \int_{V} d^4x \sqrt{-g} \; ~^*\!R R   \,,
\ee
Using the definition of the Levi-Civita tensor, and the fact that the spacetime metric has Lorentz signature
\be \label{LeviS}
\epsilon^{abcd} =  \frac{-1}{\sqrt{-g}} \tilde{\epsilon}^{abcd}\,,
\ee
where $ \tilde{\epsilon}^{abcd} $ is the Levi-Civita symbol, and the definition of the dual Riemann tensor, we find
\be
(4 \pi \Delta t) \delta \mu = - \al  \; \tilde{\epsilon}^{abmn}  \int_{V}  d^4 x \; R^c_{\phantom{c}dmn} \delta R^d_{\phantom{d}cab}   \,,
\ee
where we used that the Levi-Civita symbol has zero variation because it is independent of the metric. Using the standard expression for the variation of the Riemann tensor in terms of the Christoffel connection, we find
\begin{align}
\label{eq:pre-charge}
(4 \pi \Delta t) \delta \mu = - \frac{\alpha}{2} \; \tilde{\epsilon}^{abmn}
\int_{V}  d^4 x \; R^c_{\phantom{c}dmn}  \nabla_{a} \delta \Gamma^d_{cb} \,.
\end{align}
We now wish to express this equation as a total derivative instead of as a variation. Using the Bianchi identity and the antisymmetric properties of the Riemann tensor and the Levi-Civita symbol,
\begin{align}
\label{eq:covariant}
R^c_{\phantom{c}dmn} \nabla_{a} \delta \Gamma^d_{cb} = \nabla_{a} (R^c_{\phantom{b}dmn} \delta \Gamma^d_{cb}) - \delta \Gamma^d_{cb} (\nabla_{a} R^c_{\phantom{c}dmn})\,.
\end{align}
One can easily see that the last term vanishes upon contraction with the Levi-Civita symbol in Eq.~\eqref{eq:pre-charge}, as it is simply the dualized differential Bianchi identity:
\be
\nabla_{[a} R_{bc]de} = 0 \Rightarrow \epsilon^{abcf} \nabla_{[a} R_{bc]de} \propto \nabla_{a} \;^{*}R^{af}{}_{de} = 0\,.
\ee
We then find
\begin{align}
(4 \pi \Delta t) \delta \mu &= -  \frac{\alpha}{2} \;
\int_{V}  d^4 x \sqrt{-g} \; \nabla_{a} (^{*}R^c{}_{d}{}^{ab} \delta \Gamma^d_{cb}) \,,
\nonumber \\
&= - \frac{\alpha}{2} \;  \int_{C}  d\Sigma_{a} \; {}^{*}R^c{}_{d}{}^{ab} \delta \Gamma^d_{cb}\,,
\end{align}
where in the second line we have again applied Stokes' theorem and used the fact that the contribution of the endcaps cancels.

Let us now evaluate the integral at $C$. By the assumptions of the theorem, the Riemann tensor falls off at least as $ R_{klmn} \sim \mathcal{O} (r^{-2}) $ in an asymptotically Cartesian coordinate system near $i^{0}$, which implies the variation in the Christoffel symbol falls off at least as $ \mathcal{O} (r^{-1} )$. The directed surface element is proportional to $r^{2}$ near $i^{0}$ and thus the integrand decays at least as ${\cal{O}}(r^{-1})$. One then finds that the integral vanishes and thus $\delta \mu = 0$.  This then implies that the scalar charge $\mu$ is independent of $\lambda$, which in turn implies that the scalar charge vanishes identically.
\end{proof}


The main result that is proved above is that rotating (or non-rotating) stellar bodies devoid of singularities, like stars or compact objects excluding BHs, have identically zero scalar charge, irrespective of their internal composition and equation of state in dynamical Chern Simons gravity. Our results, therefore, greatly extend those of~\cite{Yagi:2013mbt}, who found numerically that slowly-rotating NSs with a given set of equations of state have zero scalar charge. Moreover, our results also extend those of~\cite{Yagi:2015oca}, which proved a similar result but in EdGB gravity, a theory in which the scalar field is coupled to a parity-even curvature invariant. Combining those results with the results presented here, we can then conclude that \emph{stationary stellar bodies devoid of singularities source no scalar charge (whether they be rotating or not) in all (parity-even and parity-odd) quadratic gravity theories.}

The implications of the above result spread in a web of interesting truths. When a stellar body sources scalar charge, then its motion forces the emission of dipole scalar waves, in the same way as an electric charge in motion emits dipole electromagnetic waves. Such scalar waves carry dipole energy away from the body, forcing it to slow down. Therefore, when such a stellar body is in a binary system, the emission of scalar radiation forces the binary to lose binding energy more rapidly than predicted in General Relativity. This accelerated loss of energy translates into an accelerated rate of orbital period decay. Since radio observations do not see such a modification from General Relativity in the orbital period of binary pulsars, such dipolar scalar radiation is stringently constrained. But the result presented above says that stellar bodies in quadratic gravity do not source scalar charge, which therefore implies that quadratic gravity cannot be constrained so easily with binary pulsar observations.

\section{\label{sec:IV}Black Hole Hair Loss in Dynamical CS Gravity}

This section presents the second main result of this paper, i.e.~that scalar charge is not sourced in a vacuum spacetime in parity violating gravity models like in dCS theory, even if it is punctured by a curvature singularity. 

\begin{theorem} Consider an asymptotically flat, stationary and axisymmetric BH spacetime with a bifurcate Killing horizon (as described by Fig.2) and a real scalar field $ \vartheta $ stationary under the same isometries as the metric satisfying Eq. \eqref{eq:CSfield} . Then the scalar charge as defined in this paper vanishes in such a spacetime.
\end{theorem}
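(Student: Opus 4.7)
The plan is to reduce the scalar charge $\mu$ to a surface integral over the horizon bifurcation two-sphere, and then to argue that this surface integral vanishes because it represents the Euler class of the normal bundle of the bifurcation surface in the ambient four-manifold. Unlike the proof of Theorem~1, the variational deformation argument is unavailable here, because a BH spacetime cannot be continuously connected to Minkowski through a one-parameter family of regular, stationary, asymptotically flat metrics without crossing a singular configuration. One must instead evaluate the scalar-charge integral directly, carefully accounting for the inner boundary at the horizon.

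The first step would be to define $\mu$ through the same asymptotic expansion $\vartheta = \mu/r + \mathcal{O}(r^{-2})$ used in Theorem~1, and then to integrate Eq.~\eqref{eq:CSfield} over a spacetime region $V$ bounded by the two spacelike slices $\Sigma_1$ and $\Sigma_2$ separated by $\Delta t$ along the stationary Killing field, an outer cylinder $C$ at large $r$, and an inner tube $T$ obtained by translating the bifurcation two-sphere along the same Killing field. Applying Stokes' theorem to the $\Box\vartheta$ side, together with the endcap cancellation enforced by stationarity, yields $-4\pi\mu\Delta t$ from $C$ plus a regular flux of $\nabla^a\vartheta$ across $T$. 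For the Pontryagin side I would use the topological identity Eq.~\eqref{eq:pont2} together with the generalized Noether prescription of Ref.~\cite{Wald:1993nt} (following the strategy already employed for EdGB in Ref.~\cite{PhysRevD.98.021503}) to convert the volume integral of $^*\!RR$ into boundary terms. The outer contribution at $C$ falls off under the asymptotic-flatness assumptions on the Riemann tensor, exactly as in Theorem~1, so the entire scalar charge is driven by the inner piece supported at the bifurcation two-sphere.

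The main obstacle, and the step where the essential new ingredient enters, is to identify that inner boundary integral with the Euler class of the rank-two normal bundle of the bifurcation two-sphere, and then to show that this class vanishes. For the identification, the plan is to exploit the defining relation of a bifurcate Killing horizon, namely that on the bifurcation surface the horizon-generating Killing field $\chi^a$ vanishes with $\nabla_a \chi_b = \kappa\,\epsilon_{ab}$ for $\epsilon_{ab}$ the horizon binormal, so that the contraction of $^{*}R^{c}{}_{d}{}^{ab}$ with the tangent and normal structure of the bifurcation surface naturally organizes itself into the Pfaffian of the curvature of the normal connection, i.e.~the Euler two-form. For the vanishing, I would invoke the standard topological fact that the Euler class of any real vector bundle admitting a nowhere-vanishing global section must be zero: on a bifurcate Killing horizon, the two null generators of the pair of crossing null hypersurfaces provide everywhere smooth, everywhere nonzero, linearly independent sections of the rank-two normal bundle over the bifurcation sphere, which forces the Euler class, and therefore $\mu$, to vanish identically.
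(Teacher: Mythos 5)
Your overall strategy --- reduce the scalar charge to a surface integral at the bifurcation two-sphere via a Noether/boundary-term argument, use $\nabla_a\chi_b=\kappa\,\epsilon_{ab}$ to organize that integral into the curvature of the normal bundle, and kill it by exhibiting nowhere-vanishing sections (the two null normals) --- is essentially the same as the paper's, including the closing identification of the integrand with the Euler class of the normal bundle. The paper's bookkeeping differs and is cleaner: instead of a four-dimensional time slab with an inner world tube, it constructs an explicitly divergence-free antisymmetric tensor $Q^{ab}=-2X^{[a}\nabla^{b]}\vartheta+\tfrac{\alpha}{2}\,{}^{*}\!R^{abcd}\nabla_cX_d$ and equates its Gauss-law charge on the sphere at infinity ($4\pi\mu$) with its charge on $\mathcal{B}$. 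The entire technical content is then the verification that the counter-term $\tfrac{\alpha}{2}\,{}^{*}\!R^{abcd}\nabla_cX_d$ has divergence $\tfrac{\alpha}{4}X^{b}\,{}^{*}\!RR$, which is precisely the step you outsource to ``the generalized Noether prescription'' without carrying it out. The paper also gives a more elementary route to the final vanishing: on $\mathcal{B}$ the extrinsic curvature vanishes, so $\tilde{n}^{cd}R_{abcd}$ pulls back to an exact two-form and Stokes' theorem on the closed surface $\mathcal{B}$ finishes the job; the Euler-class, nowhere-vanishing-section argument you give appears there only as a second viewpoint.

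Two concrete points need repair in your construction. First, the inner tube $T$ ``obtained by translating the bifurcation two-sphere along the Killing field'' is degenerate: $\mathcal{B}$ is the fixed-point set of the horizon-generating Killing field and is mapped into itself by the stationary Killing field, so the orbit of $\mathcal{B}$ does not sweep out a three-dimensional hypersurface. You must either place the inner boundary on the horizon proper and then take a limit onto $\mathcal{B}$, or abandon the time-slab picture in favor of the Cauchy-surface Gauss-law picture. Second, you leave the flux of $\nabla^a\vartheta$ across the inner boundary as ``a regular flux'' without showing that it vanishes; it does, but only because on a Killing horizon the surface element is proportional to the horizon generator $\chi_a$, so the integrand is $\chi^a\nabla_a\vartheta=\pounds_\chi\vartheta=0$ by the assumed stationarity and axisymmetry of $\vartheta$ (equivalently, in the paper's version, the term $X^{[a}\nabla^{b]}\vartheta$ in $Q^{ab}$ dies automatically at $\mathcal{B}$ where $X^a=0$). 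With those two steps supplied, your argument closes.
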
 

\begin{proof}

The strategy of the proof is as follows. We will first find a particular antisymmetric tensor field $Q^{ab}$ that satisfies ${\nabla _a}{Q^{ab}}=0$.  One can think of this in analogy to Maxwell's source-free electrodynamics, in which the Maxwell tensor is antisymmetric by definition, it has an associated charge, and it must be divergence free. The latter is a statement of flux conservation, which implies the charge associated with the Maxwell tensor must be conserved on any 2-sphere. Similarly, in our case the antisymmetric $Q^{ab}$ tensor defines, through Gauss's law, a charge for any $S^2$, and since $Q^{ab}$ is divergence-free, this charge is the same for any $S^2$ (See Fig.\ref{Fig. 1}).  We will then find the $Q^{ab}$ charge associated with a two-sphere at infinity, and show that it is proportional to the scalar charge. Subsequently we will find the $Q^{ab}$ charge associated with the bifurcate Killing horizon of the BH and show that it vanishes.  Since these two $Q^{ab}$ charges must be equal, it will then follow that the scalar charge vanishes.
\begin{figure}
	\centering

	\begin{tikzpicture}
	
	\draw (-1.5,-1.5)--(2,2) ;
	\draw (2,2)--(4,0);
	\draw (-1.5,1.5)--(2,-2);
	\draw (2,-2)--(4,0);
	\draw plot[domain=0:2*pi,samples=100](\x*0.202 *pi,{0.25*sin(\x r)}) ;
	\node (b) at (0,0) [label=$\mathcal{B}$]{};
	\node (h1) at (1,1) [label=$\mathcal{H}^+$]{};
	\node (h2) at (1,-1) [label=below:$\mathcal{H}^-$]{};
	\node (s) at (2,0) [label=$\Sigma$]{};
	\node (i) at (4,0) [label=right:$ \infty $]{};
	\draw [densely dashed,->] (3.2,1.2)--(2.7,0.7);
	\node (m) at (3.3,1.1) [label= $\mathcal{M}$ ]{};
	
	\end{tikzpicture}
	\caption{\label{Fig. 1} A schematic diagram for BH spacetime with a bifurcation surface. Here, $ \mathcal{M}$ is manifold and $\Sigma $ defines the Cauchy surface. $\mathcal{H} $ is the bifurcate Killing horizon with the bifurcation surface $\mathcal{B}$. The manifold $\mathcal{M}$ has boundaries at $\mathcal{B} $ and spatial infinity.}
\end{figure}
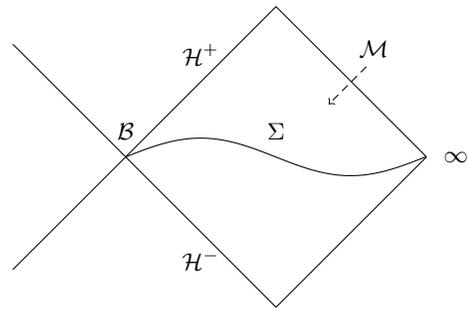

We must then begin by finding the appropriate $Q^{ab}$ and show that it has the properties discussed above. Let us first notice that the only way to construct an antisymmetric tensor of rank 2 from two 4-vectors is through their antisymmetrized exterior (wedge) product. One natural 4-vector to consider is a Killing vector $X^{a}$. Another natural 4-vector is that which arises from Eq.~\eqref{eq:CSfield}, which can be written as ${\nabla _a}{J^a} = 0$ where the current $J^a$ is given by ${J^a}=-{\nabla ^a}\vartheta - (\alpha/2){K^a}$.  We recognize this $J^a$ as the Noether current associated with the shift symmetry of the action under $\vartheta \to \vartheta + c$ ~\cite{doi:10.1063/1.528801}. With this at hand, we might then be tempted to try $Q_{\N,1}^{ab} = - 2 {X^{[a}}{J^{b]}}$, which then leads to 
\be
{\nabla _a}  Q_{\N,1}^{ab} = {X^b}{\nabla _a}{J^a} - {J^b}{\nabla _a}{X^a} - 
{\pounds _X}{J^a} 
\ee 
We know that $J^a$ is divergence-free and that any Killing vector is divergence-free.  Furthermore, any tensor field geometrically defined in terms of the metric is Lie dragged by a Killing vector.  Unfortunately, however, $K^a$ is not a geometrically defined tensor field, so this candidate for $Q^{ab}$ is not quite what we want. 

But not all is lost. The Noether current $J^{a}$ has two pieces, $-\nabla^{a} \vartheta$ and $-(\alpha/2) K^{a}$. Even though the latter is not geometrically defined, the former is and could be used in a new definition of $Q^{ab}$. Choosing then to define $Q_{\N,2}^{ab} = -2  {X^{[a}}{\nabla^{b]} \vartheta}$, however, one discovers rapidly that this antisymmetric tensor is not divergence free: 
\begin{eqnarray} \label{eq:Q1}
{\nabla _a} Q_{\N,2}^{ab} &=& {\nabla _a} ( - 2 X^{[a} \nabla^{b]} \vartheta ) 
\nonumber \\
&=& {X^b} \Box \vartheta - ({\nabla _a}{X^a}){\nabla ^b}\vartheta - {\pounds _X} ({\nabla ^b} \vartheta )
\nonumber
\\
&=& - {\frac \alpha 4}  {X^b} \; {}^*\!R R  \;,
\end{eqnarray}
where the second and third terms in the second line of Eq.~\eqref{eq:Q1} vanish for the reasons described earlier. Clearly then, we must add a \emph{counter-term} ${\cal{C}}^{ab}$ to the new definition of $Q^{ab}$, namely
 \be
 \label{eq:Charge}
Q^{ab} = -2  {X^{[a}}{\nabla^{b]} \vartheta} + {\cal{C}}^{ab}\,,
\ee
such that the divergence does vanish. 

The counter-term must then satisfy 
\begin{align}
\label{eq:no-div-zero}
\nabla_{a} {\cal{C}}^{ab} = \frac{\alpha}{4} X^{b}  \; {}^*\!R R\,.
\end{align}
One can verify by direct evaluation that the counter-term we need is
\begin{align}
{\cal{C}}^{ab} = \frac{\alpha}{2} \; {}^*\!R^{abcd} \; \nabla_c X_d\,.
\end{align}
We can prove this statement by taking the divergence of the counter-term:
\begin{align}
{\nabla _a} {\cal{C}}^{ab} &= 
\frac{\alpha}{2} ({\nabla _a}  {}^*\!R^{abcd}) (\nabla_c X_d) 
+ \frac{\alpha}{2} \; {}^*\!R^{abcd} \; {\nabla _a}{\nabla _c}{X_d}\,,
\end{align}
but the first term on the right-hand side vanishes by the Bianchi identity, while the second term simplifies using the definition of the Riemann tensor in terms of the commutator of covariant derivatives, leading to
\begin{eqnarray}
\nabla_{a} {\cal{C}}^{ab} = - \frac{\alpha}{2} \; {}^*\!R^{abcd} \; {R_{cdae}}{X^e}\,.
\end{eqnarray}
This expression is still not quite the same as Eq.~\eqref{eq:no-div-zero}, mainly because the free index in not on the Killing vector. We can, however, use the following expression for the Riemann tensor 
\be
R_{bcde} = -\frac{1}{4}\epsilon^{mnxy} \epsilon_{demn} R_{bcxy}\,,
\ee
and then contract one of the Levi-Civita tensors above with the Levi-Civita tensor implicit in the dual Riemann through the Levi-Civita Kronecker identity to find identically Eq.~\eqref{eq:no-div-zero}. One can arrive at a similar conclusion using methods from exterior calculus as done in a different theory in~\cite{PhysRevD.98.021503}, which we spell out in Appendix \ref{Cartan}.

With the expression for the conserved charge in Eq. \eqref{eq:Charge}, let us now consider an asymptotically flat, stationary and axisymmetric punctured spacetime with a stationary and axisymmetric dCS field $ \vartheta $ that satisfies the dCS field equations assuming all fields are smooth throughout the manifold ${\cal{M}}$. We assume this spacetime possesses a bifurcate Killing Horizon $\mathcal{H}$ with a bifurcation surface $\B$ that is compact. An important property to remember is that the extrinsic curvature of the bifurcation surface vanishes. We choose $X^a$ to be the particular Killing field that is tangent to the null generators of the horizon: $ X^a = t^a + \Omega_{\mathcal{H}} \phi^a $ where $ t^a $ denotes the time translation Killing field, $ \phi^a $ is the axial Killing field and $\Omega_{\mathcal{H}}$ is the angular velocity of the horizon. This Killing field also vanishes at the bifurcation surface.  Consider now a Cauchy surface $ \Sigma $ denoting the hypersurface that extends from $\B$ out to spatial infinity, with the assumptions that the boundary at spatial infinity is asymptotically flat with the asymptotic conditions 
\begin{align} \label{eq:AsymCondition}
t^\mu & = (\partial_t)^\mu + \mathcal{O}(1/r) \,, \\
\phi^\mu & = (\partial_\phi)(1+\mathcal{O}(1/r)) \,, \\
g_{\mu \nu} & = \eta_{\mu \nu} + \mathcal{O}(1/r) \,, \\ 
\vartheta & = \frac{\mu}{r} + \mathcal{O}(1/r^2) \,,	
\end{align}
 where $ \eta_{\mu \nu} $ is the Minkowski metric and all the n-th derivatives of the above quantities are required to fall off as $ 1/r^n $. 

Using Eq.~\eqref{eq:Charge}, we will now prove our main result, i.e.~that the scalar charge vanishes on such a BH spacetime. The charge associated with the two-sphere at infinity is
\begin{align} \label{eq:scalarcharge}
\int_\infty u_a ~ d \Sigma_b {Q^{ab}} =& -2 \int_\infty u_a ~ d \Sigma_b X^{[a}\nabla^{b]} \vartheta 
\nonumber \\
&= - \int_\infty d \Sigma^r \partial_r \vartheta 
\nonumber \\ 
&= 4 \pi \mu \,.
\end{align}
Here the two-sphere is a sphere of radius $r$ in an asymptotically flat spacelike surface $\Sigma$ and $u^a$ is the unit timelike normal to $\Sigma$.  The curvature term from Eq.~\eqref{eq:Charge} does not contribute as the Riemann tensor falls off at least as $1/ r^3 $ at spatial infinity, whereas the $\phi^a$ contribution coming from the Killing field vanishes as $ \phi^a $ is tangent to the 2-sphere at infinity. 

On the bifurcation surface $\mathcal{B}$, the Killing vector vanishes $ X^a |_\mathcal{B} = 0 $, and its covariant derivative is proportional to the binormal $\tilde{n}_{a}{}^{b}$, namely $ \nabla_a X^b |_\mathcal{B} = \kappa \, \tilde{n}_a{}^b $~\cite{Wald:1984cw}, where $ \kappa $ is constant \cite{KAY199149}. Thus, the charge associated with the bifurcation surface is 
\be \label{eq:final}
\int_\B u_a ~ d\Sigma_b {Q^{ab}} = \frac{\alpha}{2} \kappa \int_\B u_a ~ d\Sigma_b \; {}^*\!R^{abcd} ~ \tilde{n}_{cd} \,.
\ee
Let us now consider the pullback of $ \tilde{n}^{cd} R_{abcd} $ to $\B$. We can rewrite $ \tilde{n}^{ab} $ in terms of null normals to $\B$: $ p^a $ and $ q^b $ where $ p_a q^a = -2 $ and $ \tilde{n}^{ab} = p^{[a}q^{b]} $. Then, 
\be
\tilde{n}^{cd} R_{abcd} = p_c \nabla_{[a} \nabla_{b]} q^c = \nabla_{[a|}(p_c \nabla_{|b]} q^c) - \nabla_{[a|}p_c \nabla_{|b]} q^c,
\ee
but since the extrinsic curvature is just the tangential part of the covariant derivative of the null normals, and the extrinsic curvature vanishes on $\B$, it follows that the second term above vanishes. This is just the geometric statement that the null normals are parallel transported along $\B$. 
Equation \eqref{eq:final} thus becomes 
\be \label{eq:prefinal}
\int_\B u_a ~ d\Sigma_b {Q^{ab}} = \frac{\alpha}{2} \kappa \int_\B u^a ~ d\Sigma^b \epsilon^{xy}{}_{ab} \nabla_{[x|}(p_c \nabla_{|y]} q^c) \,.
\ee
If we apply Stokes' theorem to the right hand side of Eq.~\eqref{eq:prefinal}, our integral vanishes since the boundary of a boundary ($\mathcal{B}$ in our case) is zero. Equating the charges at the two boundaries (spatial infinity and the bifurcation surface) of our spacetime, and using Eq.~\eqref{eq:scalarcharge} and~\eqref{eq:prefinal}, it then follows that the scalar charge vanishes, namely
\begin{align}
&\int_\infty u_a ~ d \Sigma_b {Q^{ab}} = \int_\B u_a ~ d\Sigma_b {Q^{ab}}\,,
\nonumber \\ \implies 4 \pi& \mu = \frac{\alpha}{2} \kappa \int_\B u^a ~ d\Sigma^b \epsilon^{xy}{}_{ab} \nabla_{[x|}(p_c \nabla_{|y]} q^c) = 0 \,.
\end{align}

Another way to realize that the integral on the right-hand side of Eq. \eqref{eq:final} must vanish is to identify it with the Euler class of the normal bundle of $\B$ in the manifold $\cal{M}$. This is clearer when done in the Cartan formulation, as we show in Appendix \ref{Cartan}. From this, one can see that the Riemann tensor is projected onto the horizon on two indices and is orthogonal to the horizon on the other two indices. Since we are working with smooth, nowhere vanishing normals to the bifurcation surface $\B$, the Euler class of the normal bundle must \emph{vanish} \cite{BottRaoul1982DFiA}. Thus, the scalar charge cannot be sourced on the surface of a BH in a dCS spacetime devoid of a scalar potential.

\end{proof}

\section{Discussion}
\label{sec:conclusions}

We have here used a variational method to prove that in the dCS gravity no scalar charge, i.e.~the $1/r$ piece of the scalar field near spatial infinity, is sourced on stationary NSs or on stationary and axisymmetric BHs. This adds to the conjecture that theories with a shift-symmetric scalar field that is coupled to a topological density do not activate a scalar charge in such astrophysical objects. For the NS case, the proof relied on a continuous deformation of the spacetime away from Minkowski, while in the BH case the proof made use of generalized Noether currents generated by shift symmetry. This result is important because it implies that dipole radiation is not excited in dCS gravity when considering the inspiral phase of compact binaries, which then allows the theory to evade current binary pulsar constraints.  


Our work, of course, is not the first to consider the scalar charge in compact objects outside of General Relativity. Early work on BH solutions in the low-energy limit of string theory, which is related to EdGB gravity and dCS gravity, dates back to the 1990s~\cite{Campbell:1990ai,Campbell:1991kz,Campbell:1992hc,DUNCAN1992215}. However, the study of scalar charge started much more recently with the work of Yagi et al~\cite{Yagi:2015oca}, who proved analytically that the EdGB scalar charge vanishes in stationary and spherically symmetric stars. This result was recently extended by Prabhu and Stein~\cite{PhysRevD.98.021503}, who proved that the EdGB scalar charge does not vanish in stationary and axisymmetric BHs. Their argument relies in equating the scalar charge at spatial infinity with the Noether charge at the bifurcate horizon, which cannot vanish in EdGB BHs of a finite area.

Our results extend those of Yagi, et al~\cite{Yagi:2015oca} and Prabhu and Stein~\cite{PhysRevD.98.021503}, by considering both stars and BHs in dCS gravity. Using arguments similar to those of Yagi, et al~\cite{Yagi:2015oca} we first proved that the dCS scalar charge vanishes in stationary and axisymmetric stars. Then, using arguments similar to those of Prabhu and Stein~\cite{PhysRevD.98.021503} we proved that the dCS scalar charge also vanishes in stationary and axisymmetric BHs. We can then combine the results of Yagi, et al~\cite{Yagi:2015oca}, Prabhu and Stein~\cite{PhysRevD.98.021503} and ours to conclude that scalar charge is never sourced in stationary and axisymmetric stars in quadratic gravity, while it can only be sourced for stationary BHs in EdGB gravity.

One possible avenue for future work would be the investigation of the theorems discussed above through the use of a generalized version of the Chern-Gauss-Bonnet theorem to Pontryagin densities. Such a generalization does not yet exist, but if one could be developed, then it could be straightforwardly applied in a way similar to that used in~\cite{Yagi:2015oca} when considering Einstein-dilaton-Gauss-Bonnet gravity. 

Another possible extension concerns the $1/r^{2}$ piece of the scalar field at spatial infinity. Using perturbative techniques valid in the slow-rotation limit, several authors have approximations for slowly-rotating BHs, which do source a $1/r^{2}$ scalar field profile. It may be possible to express this charge as an closed integral over the bifurcation surface of a Kerr BH in the decoupling limit, just as was done in EdGB gravity for the $1/r$ part in~\cite{PhysRevD.98.021503}. Doing so may reveal how the $1/r^{2}$ charge of spinning BHs in dCS gravity depends on integrals over topological invariants.   

A third possibility is to study whether the theorems presented in this paper can be extended to non-vanishing scalar field potentials. A non-trivial potential could endow the scalar field with a mass, or it could introduce self-interactions that may lead to interesting dynamics. Such a term, however, would break shift symmetry, and this may pose a challenge to the Noether current method. Thus, it may be possible to generate interesting scalar charge effects, even at $1/r$ order, when including non-trivial potentials. 

Yet another possibility deals with dCS BHs and the concept of entropy. If we knew the exact solution for a spinning BH in dCS gravity, including its $1/r^{2}$ and higher order charges, we could then study its associated Hawking entropy. This, in turn, could be used to investigate the transition between NSs and BHs in quadratic gravity, in order to determine whether the universality found in GR~\cite{Alexander:2018wxr}~\cite{Alexander_2018} continues to hold in modified theories. 

Finally, it would be interesting to study how the $1/r^{2}$ scalar charge of a BH in dCS gravity affects the trajectory of a small compact object around a supermassive BH, and the gravitational waves that would be emitted. Such an analysis was already started in~\cite{Sopuerta:2012de,Chua:2018yng}, but the effect of dipole-dipole forces was not included then. This force, however, could enhance dCS  modifications to EMRI waveforms, allowing space-based detectors like LISA to place interesting constraints in the future.   

\section{Acknowledgments} 

We would like to thank Leo C. Stein and Kartik Prabhu for useful discussions . Nicol\'as Yunes acknowledges support from NSF grant PHY-1759615, NASA grants NNX16AB98G and 80NSSC17M0041. David Garfinkle acknowledges the support from NSF grants PHY-1505565 and PHY-1806219. Lydia Bieri acknowledges the support of NSF Grants DMS-1253149 and DMS-1811819 and Simons Fellowship in Mathematics 555809.


\appendix

\section{Relation between topological current and Pontryagin density}
 
\label{appendixA}

We here prove the relation
\begin{equation} \label{eq:pont}
\nabla_a K^a = \frac{1}{2} ~^*\!R R \,,
\end{equation}
where
\begin{equation}
K^a  = \epsilon^{abcd} \Gamma^n_{bm} (\partial_c \Gamma^m_{dn} + \frac{2}{3} \Gamma^m_{cl} \Gamma^l_{dn} ) ~.
\end{equation}
We take the covariant derivative of the above expression to find
\begin{equation}
\nabla_a K^a = \partial_a K^a + \Gamma^a_{ap} K^p\,,
\end{equation}
and we separate the above calculation in two parts: the partial derivative and the Christoffel connection. The partial derivative term is
\begin{align}
\label{eq:deriv}
\partial_a K^a & =  \epsilon^{abcd} (\partial_a \Gamma^n_{bm}) (\partial_c \Gamma^m_{dn}) + \frac{2}{3} \epsilon^{abcd} (\partial_a \Gamma^n_{bm}) ( \Gamma^m_{cl} \Gamma^l_{dn})
\nonumber \\
& +  \epsilon^{abcd}  \Gamma^n_{bm} (\partial_a\partial_c \Gamma^m_{dn}) + \frac{2}{3} \epsilon^{abcd}  \Gamma^n_{bm}(\partial_a  \Gamma^m_{cl}) \Gamma^l_{dn}
\nonumber \\
& + \frac{2}{3} \epsilon^{abcd}  \Gamma^n_{bm} \Gamma^m_{cl} (\partial_a \Gamma^l_{dn} )\,,
\end{align}
while the Christoffel term is
\begin{align}
\label{eq:Gamma}
\Gamma^a_{ap} K^p = \epsilon^{pbcd} \Gamma^a_{ap} \Gamma^n_{bm} (\partial_c \Gamma^m_{dn}) + \frac{2}{3} \epsilon^{pbcd} \Gamma^a_{ap} \Gamma^n_{bm} \Gamma^m_{cl} \Gamma^l_{dn}~.
\end{align}

Let us now expand the right-hand side of Eq. (\ref{eq:pont}). Using the definition of the Riemann tensor in terms of the Christoffel connection, we have
\begin{align}
\label{eq:R}
R^b{}_{dca} &= \partial_a \Gamma^b_{cd} - \partial_c \Gamma^b_{ad} + \Gamma^b_{ae}\Gamma^e_{cd} - \Gamma^b_{ce}\Gamma^e_{ad} \,,
\\
\label{eq:dualR}
~^*\! R^{dca}{}_{b} &= \frac{1}{2} \epsilon^{camn} (\partial_n \Gamma^d_{bm} - \partial_m \Gamma^d_{bn} + \Gamma^d_{ne}\Gamma^e_{mb} - \Gamma^d_{me}\Gamma^e_{nb}) ~.
\end{align}

After multiplying the terms of Eq.~(\ref{eq:R}) and (\ref{eq:dualR}), we can now compare both sides of the equation. For example, consider the first term from Eq.~(\ref{eq:R}) multiplied by the first term of Eq. (\ref{eq:dualR}). After a relabeling of the Levi-Civita tensor and using its antisymmetry property, one can easily see that this term is identical to one-fourth of the first term of Eq. (\ref{eq:deriv}). One can similarly establish that the rest of the terms produced when multiplying Eq.~(\ref{eq:R}) and (\ref{eq:dualR}) produce all the necessary terms in Eq.~(\ref{eq:deriv}) and (\ref{eq:Gamma}).

\section{Variation of the Pontryagin density in terms of the topological current}
\label{appendixB}
Instead of taking the variation of the product of the dual Riemann tensor and the Riemann tensor, we can consider instead using Eq. (\ref{eq:pont2}), namely
\begin{equation} \label{eq:appendixb}
\delta \int d^4 x  \sqrt{-g} \; R ~^*\!R  = 2 ~ \delta \int d^4 x  \sqrt{-g} \;  \nabla_a K^a \,,
\end{equation}
Proceeding in the same fashion as before, we can expand the Levi-Civita tensor into the Levi-Civita symbol to find
\begin{align}
  \delta \int d^4 x  \sqrt{-g} \;  \nabla_a K^a &= {\rm{sgn}}(g) \; \tilde{\epsilon}^{abcd} \int d^4 x  \;
  \nonumber \\
&\times  \delta \left[ \nabla_a \Gamma^n_{bm} \left(\partial_c \Gamma^m_{dn} + \frac{2}{3} \Gamma^m_{cl} \Gamma^l_{dn}\right) \right]\,.
\end{align}
We can expand the right-hand side using the definition of the covariant derivative to find
\begin{widetext}
\begin{align}
  \delta \int d^4 x  \sqrt{-g} \;  \nabla_a K^a &=  \partial_a \left[(\partial_c \Gamma^m_{dn} + \frac{2}{3} \Gamma^m_{cl}\Gamma^l_{dn}) (\delta \Gamma^n_{bm})\right] \tilde{\epsilon}^{abcd}
+
\partial_a \left[ -(\partial_c \Gamma^n_{bm}) \delta \Gamma^m_{dn} + \frac{2}{3} (\delta \Gamma^m_{cl} )\Gamma^n_{bm} \Gamma^l_{dn}  + \frac{2}{3} \Gamma^n_{bm} \Gamma^m_{cl} (\delta \Gamma^l_{dn})  \right] \tilde{\epsilon}^{abcd}
\nonumber \\
&+
\delta \Gamma^a_{ap}\left[ (\partial_c \Gamma^m_{dn}) + \frac{2}{3} \Gamma^m_{cl}\Gamma^l_{dn}  \right] \Gamma^n_{bm} \tilde{\epsilon}^{pbcd}
+
\Gamma^a_{ap} (\partial_c  \Gamma^m_{dn} + \frac{2}{3} \Gamma^m_{cl}\Gamma^l_{dn} ) \delta \Gamma^n_{bm}\tilde{\epsilon}^{pbcd}
\nonumber \\
&+
   \Gamma^a_{ap} \left[ - \partial_c \Gamma^n_{bm} ( \delta \Gamma^m_{dn} ) + \frac{2}{3} \Gamma^n_{bm} \Gamma^l_{dn} (\delta \Gamma^m_{cl})  + \frac{2}{3} \Gamma^n_{bm}  (\delta \Gamma^l_{dn} ) \Gamma^m_{cl} \right] \tilde{\epsilon}^{pbcd} \,,
\end{align}
\end{widetext}
Expanding these equations and relabelling indices, returns the right hand side of Eq. (\ref{eq:appendixb}), namely
\begin{align}
&{\rm{sgn}}(g) \; \tilde{\epsilon}^{abcd} \int d^4 x  \; \delta \left[ \nabla_a \Gamma^n_{bm} \left(\partial_c \Gamma^m_{dn} + \frac{2}{3} \Gamma^m_{cl} \Gamma^l_{dn} \right) \right]
\\
&= {\rm{sgn}}(g) \; \tilde{\epsilon}^{abcd} \int d^4 x \;  \nabla_a (R^m_{dcn} \delta \Gamma^n_{bm})\,.
\end{align}


\section{A Cartan formulation}
\label{Cartan}
One can also use the Cartan formulation to reformulation Sec.~\ref{sec:IV}. The process is very similar to the tensorial formulation. We begin by expressing the Lagrangian as 
\be
\mathcal{L} = \mathcal{L}_g + \mathcal{L}_\vartheta \,,
\ee
where the Lagrangian is again an n-form which is a collection of dynamical fields. We know that
\be
\mathcal{L}_\vartheta = - \frac{1}{2}(*d\vartheta)\wedge d\vartheta + \frac{\alpha}{4} {}^* R R \,,
\ee
where $*$ is the Hodge dual defined as 
\be
(*Z)_{a_1 ... a_{4-p}} = \frac{1}{p!} \epsilon^{b_1 ... b_{p}}{}_{a_1 ... a_{4-p}} Z_{b_{1} ... b_p} \,,
\ee
and $d$ is the exterior derivative. Rewriting the Pontryagin term as 
\begin{align}
{}^* RR &= \frac{1}{2} \epsilon^{cdef} R^a{}_{bcd} R^b{}_{aef} \nonumber \\
&= R^{a}{}_b \wedge R^b{} _{a} \\
&= d (\omega^a{}_b \wedge (R^b{}_a - \frac{1}{3} \omega^b{}_c \wedge \omega^c{}_a )) \,,
\end{align}
where $\omega^{ab} $ is the connection one-form.

Varying the Lagrangian by the least action principle in turn gives us the equations of motion, namely
\be
\delta \mathcal{L} = E_g \delta g + E_\vartheta \delta \vartheta + d\theta \,,
\ee
where we have defined
\be
E_\vartheta = 0 = \Box \vartheta + \frac{\alpha}{4} {}^* R R = d {}*d\vartheta + \frac{\alpha}{4} {}^* RR \,,
\ee
and $\theta$ is the symplectic potential arising due to the variation of the Lagrangian and represents the boundary terms. This is a local (n-1)-form and a covariant functional of $(g_{\mu \nu}, \vartheta) $, so it depends on $(\delta g_{\mu \nu}, \delta \vartheta )$.  This functional can be expressed as 
\be
\theta = \theta_g(\delta g) - *d\vartheta \delta \vartheta \,.
\ee

Assuming the Lagrangian is not invariant under Lorentz diffeomorphisms, we apply a transformation such that $ \delta_c g_{\mu \nu} = 0 , \delta_c \vartheta = c $ for $c$ a constant. Under this transformation, the symplectic potential takes the form
\be 
\theta = -c * d \vartheta \,,
\ee
and thus the Lagrangian changes by 
\be
\delta_c \mathcal{L} = c \frac{\alpha}{4} {}^* RR = d(c \frac{\alpha}{4} \gamma)\,.
\ee 
This can be thought of as a shift symmetry, whose corresponding (n-1)-form Noether current is 
\be
\J = - * d\vartheta - \frac{\alpha}{4} \gamma
\ee
Notice then that by construction $ d \J =  -E_\vartheta $, and thus, $\J$ is indeed a conserved current on shell ie. $d\J = 0 $ on shell. Because $\J$ is an exact-closed form on shell, we then have that 
\be
\J = d \q \,,
\ee
where $\q$ is the conserved charge and an (n-2)-form on shell. 

We now define the covariant shift current for some vector field $X$ using the same reasoning as that of Sec.~\ref{sec:IV}. The covariant conserved charge of that section is related to the following covariant shift current 
\begin{align}
\mathcal{J} &= - \pounds_X \J - \frac{\alpha}{4} \pounds_X \gamma + \frac{\alpha}{2} R^b{}_a \wedge \pounds_X \omega^a{}_b \nonumber \\
&= \pounds_X(*d \vartheta ) + \frac{\alpha}{2} R^b{}_a \wedge \pounds_X \omega^a{}_b \,.
\end{align}
The Lie derivative of the connection with respect to the vector fields on the bundle is 
\be \label{eq:1}
\pounds_X \omega^a{}_b = X . R^a{}_b + D (X.\omega^a{}_b) \,.
\ee
For a bundle vector field that is Lie-dragged along the tetrad $e^a $, i.e.~$ \pounds_X e^a = 0 $, we then also have that
\be \label{eq:2}
X.\omega^{ab} = - e^a_\mu e^b_\nu \nabla^\mu X^\nu \,.
\ee 
An in-depth discussion regarding this can be found in \cite{PhysRevD.98.021503} [Also see \cite{PhysRevD.92.124010}]. The term on the left-hand side of Eq.~\eqref{eq:2} is the interior product, generally defined by
\be
(X.Z)_{ab...m} = X^\mu Z_{\mu ab .... m}\,.
\ee

With this at hand, we can now write the covariant shift current using Eq. \eqref{eq:1} as 
\be
\mathcal{J} = d\q_X + X.E_\vartheta\,,
\ee
with 
\be
\q_X = X.(*d\vartheta) + \frac{\alpha}{2} R^a{}_b (X. \omega^b{}_a) \,.
\ee
If $ X^a $ is a Killing field, we can use Eq. \eqref{eq:2} to find that the covariant shift charge is
\be \label{eq:3}
\q_X = X.(*d\vartheta) - \frac{\alpha}{2} R^{ab} e^\mu _a e^\nu_b \nabla_\mu X_\nu \,.
\ee 

Let us now consider the asymptotically flat, stationary-axisymmetric BH spacetime of Fig.~\ref{Fig. 1}, with $\B$ and $\Sigma $ as defined earlier. We then have that $X^a = L^a $ where $L^a$ is the Killing field, and integrating $ 0 = \mathcal{J} = d \q_L $ we now have
\be \label{eq:4}
\int_\infty \q_L = \int_\B \q_L\,.
\ee
We integral on the left-hand side is to be evaluated over an asymptotic 2-sphere of radius r in the limit $ r \to \infty $. We define $\epsilon_2 $ to be the induced area element of this 2-sphere. We then have that
\be \label{eq:5}
\int_\infty \q_L = \int_\infty L .(*d \vartheta) = -\int_\infty \epsilon_2 \partial_r \vartheta = 4 \pi \mu \,.
\ee

Let us now focus on the right-hand side of Eq.~\eqref{eq:4}. On $\B$, we have $ L^a|_\B = 0$ and $ \nabla_a L_b |_\B= \kappa \tilde{n}_{cd} $ with $\tilde{n}_{cd} $ the binormal to the bifurcation surface \cite{Wald:1984cw} and $\kappa $ a constant on the bifurcation surface. Using this, the charge integral at the bifurcation surface is given by
\be \label{eq:6}
\int_\B \q_L = - \frac{\alpha}{2} \kappa \int_\B \tilde{n}_{ab} R^{ab} = 0
\ee
The above integral vanishes because the term in the integral is the Euler class of the normal bundle of $\B$ in M. Since smooth and nowhere vanishing normals to $\B$ are needed, the Euler class must vanish. Thus, Eq. \eqref{eq:5} \eqref{eq:6} along with Eq.~\eqref{eq:4}, we find our desired result: \textit{no scalar charge is sourced in stationary, axisymmetric BH spacetimes that possess a Killing horizon in dCS gravity.}

\bibliography{ppr_v}

\end{document}